\documentclass[11pt]{article}
\usepackage[hmargin=3.5cm,vmargin=3cm]{geometry}
\usepackage{amsmath}
\usepackage{amssymb}
\usepackage{amsthm}
\usepackage[mathscr]{eucal}
\usepackage{enumitem}
\usepackage[pdftex]{color}
\definecolor{darkblue}{rgb}{0.3,0.3,0.7}
\usepackage[pdftitle={Sticky continuous processes have consistent price systems},pdfauthor={Christian Bender, Mikko S. Pakkanen, and Hasanjan Sayit},colorlinks=TRUE,allcolors=darkblue]{hyperref}

\newtheorem{theorem}{Theorem}[section]
\newtheorem{lem}[theorem]{Lemma}

\theoremstyle{definition}

\newtheorem{defn}[theorem]{Definition}
\newtheorem{ex}[theorem]{Example}
\theoremstyle{remark}
\newtheorem{rem}[theorem]{Remark}
\numberwithin{equation}{section}
\setcounter{page}{1}

\linespread{1.1}

\newcommand{\ud}{\mathrm{d}}

\newcommand{\eqdefl}{\mathrel{\mathop:}=}

\newcommand{\Q}{\mathbb{Q}}
\newcommand{\R}{\mathbb{R}}

\DeclareMathOperator{\conv}{conv}
\DeclareMathOperator{\supp}{supp}
\DeclareMathOperator{\ri}{ri}
\DeclareMathOperator{\interior}{int}

\newcommand{\F}{\mathscr{F}}

\begin{document}
\title{Sticky continuous processes have \\ consistent price systems}
\date{April 17, 2014}

\author{Christian Bender\thanks{Department of Mathematics, Saarland University, Postfach 151150, D-66041 Saarbr\"ucken, Germany. E-mail: \href{mailto:bender@math.uni-sb.de}{\nolinkurl{bender@math.uni-sb.de}}},
Mikko S. Pakkanen\thanks{
%Corresponding author.
CREATES and Department of Economics and Business, Aarhus University, Fuglesangs All\'e 4, DK-8210 Aarhus V, Denmark. E-mail: \href{mailto:mpakkanen@creates.au.dk}{\nolinkurl{mpakkanen@creates.au.dk}}},
Hasanjan Sayit\thanks{Department of Mathematical Sciences, Durham University, South Road, Durham DH1 3LE, UK. E-mail: \href{mailto:hasanjan.sayit@durham.ac.uk}{\nolinkurl{hasanjan.sayit@durham.ac.uk}}}
}

\maketitle

\begin{abstract}
Under proportional transaction costs, a price process is said to have a consistent price system, if there is a semimartingale with an equivalent martingale measure that 
evolves within the bid-ask spread. We show that a continuous, multi-asset price process has a consistent price system, under arbitrarily small proportional transaction costs, if it satisfies a natural multi-dimensional generalization of the stickiness condition introduced by Guasoni \cite{Gua}.

\vspace*{1ex}

\noindent {\it Keywords:} Consistent price system, stickiness, martingale, arbitrage, transaction costs

\vspace*{1ex}

\noindent {\it 2010 Mathematics Subject Classification:} 91G80 (Primary), 60G44 (Secondary) 

\vspace*{1ex}

\noindent {\it JEL Classification:} G10, G12, D23

\end{abstract}

\section{Introduction}

In an asset pricing model that imposes proportional transaction costs on trading, a \emph{consistent price system} (CPS) is a shadow price process with an equivalent martingale measure that evolves within the bid-ask spread implied by transaction costs, a concept that dates back to the paper by Jouini and Kallal \cite{JK}. 
It is clear that trading under transaction costs cannot be more profitable than trading at the prices given by the CPS \emph{without frictions}. Thus, questions concerning, for example, the \emph{no arbitrage} (NA) condition under transaction costs can be answered using frictionless theory. In fact, such a transaction cost model satisfies NA if there exists a CPS --- any arbitrage would also be an arbitrage with respect to the CPS without frictions, a contradiction.

The question whether a continuous-time price process has a CPS under proportional transaction costs has been studied by several authors in the case where the process 
is \emph{continuous}, the case which we focus on in this note. Guasoni et al.\ \cite[Theorem 1.2]{Gua1} showed that a multi-asset price process has a CPS under arbitrarily
 small transaction costs if it has the so-called \emph{conditional full support} (CFS) property %\cite[Definition 1.1]{Gua1}.
 (see Remark \ref{CFS}, below).  
Kabanov and Stricker \cite[Theorem 1]{KS} established a similar existence result under the assumptions that the price process does not admit arbitrage 
opportunities with simple
 trading strategies and is \emph{sticky} (we will elaborate on this property shortly). However, the conditions of these two existence results are not necesssary for the existence of CPSs (see Example \ref{mono}, below).

In fact, in the single-asset case, Guasoni et al.\ \cite[Theorem 2]{Gua2} have established that a CPS exists under arbitrarily small transaction costs if and only if
 the price process satisfies the NA condition under arbitrarily small transaction costs (a \emph{fundamental theorem of asset pricing}). Earlier, Guasoni
 \cite{Gua} showed that a price process satisfies the NA condition under arbitrarily small transaction costs, if it is merely sticky.
Informally, stickiness entails that the process remains in any neighborhood of its current value with a positive conditional probability
 (see Definition \ref{stickiness}, below, for a rigorous formulation). Combining the results in \cite{Gua,Gua2}, it follows in the single-asset case that a continuous process has a CPS under arbitrarily small transaction costs if it is sticky.

The purpose of this note is to show that also in the multi-asset case any sticky continuous process has a CPS under arbitrarily small transaction costs. In particular, we give a direct proof of this statement, that is, even in the single-asset case where the fundamental theorem of Guasoni et al. \cite[Theorem 2]{Gua2} is available, we do not rely on such a result. Instead, our proof is based on the arguments used in \cite{Gua1,KS}, but we modify them in a novel way (see Remark \ref{difference}, below, for a discussion).
In this way we can carry out the construction of a CPS under the stickiness condition alone, which is weaker and easier to check than the assumptions in 
\cite{Gua1,KS}.

\section{Preliminaries and main result}

Our probabilistic setup is as follows.
Let $T \in \R_+ \eqdefl (0,\infty)$ be a fixed time horizon and $\big(\Omega,\F,(\F_t)_{t \in [0,T]},\mathbf{P}\big)$ a complete filtered probability space, such that the filtration $(\F_t)_{t \in [0,T]}$ satisfies $\F_T = \F$ along with the usual conditions of right-continuity and completeness. 
We say that an ``$\mathscr{F}$-measurable'' property $\mathscr{P}$ holds \emph{almost surely on} $A \in \mathscr{F}$ if $\mathbf{P}[\{\mathscr{P}\}\cap A] = \mathbf{P}[A]$. 
Furthermore, let $d \in \mathbb{N} \eqdefl \{1,2,\ldots\}$ and let $(S_t)_{t \in [0,T]}$, where $S_t \eqdefl (S^1_t,\ldots,S^d_t)$, be an adapted $d$-dimensional process on $\big(\Omega,\F,(\F_t)_{t \in [0,T]},\mathbf{P}\big)$ with continuous paths and values in $\R^d_+$. In terms of economic interpretation, $S$ describes the evolution of the prices of $d$ risky securities in terms of a money market. 

\begin{defn}
An adapted $d$-dimensional stochastic process $(M_t)_{t \in [0,T]}$, where $M_t \eqdefl (M^1_t,\ldots,M^d_t)$, is called an $\varepsilon$-\emph{consistent price system} ($\varepsilon$-CPS) for $S$ if for any $i \in \{1,\ldots,d\}$ and $t \in [0,T]$,
\begin{equation*}
\frac{S^i_t}{1+\varepsilon}\leq M^i_t\leq  (1+\varepsilon)S^i_t \quad \textrm{ a.s.,}
\end{equation*}
and if there is a probability measure $\mathbf{Q}$ on $(\Omega,\F)$ such that $\mathbf{Q} \sim \mathbf{P}$ and that $M$ is a $\mathbf{Q}$-martingale.
\end{defn}

If $S$ has an $\varepsilon$-CPS, then $S$, as a price process, is free of arbitrage and free lunches with vanishing risk with $\varepsilon$-sized proportional 
transaction costs (which follows, e.g., by combining Corollary 1.2 of \cite{DS94} and Lemma 2.1 of \cite{Gua}). We refer the reader to \cite{Gua1,Gua2} for details on 
 how the value of a portfolio is defined under proportional transactions costs in continuous time. 

To formulate our main result, we recall the notion of \emph{stickiness} --- here in a multidimensional form --- that was initially introduced by Guasoni \cite{Gua}.
To this end, we use the norm $\|x\| \eqdefl \max(|x_1|,\ldots,|x_d|)$ for any $x = (x_1,\ldots,x_d) \in \R^d$, and write for any stopping time $\tau \in [0,T]$,
\begin{equation*}
S^\star_\tau \eqdefl \sup_{u \in [\tau,T]} \|S_u -S_\tau\|.
\end{equation*}

\begin{defn}\label{stickiness}
The process $S$ is \emph{sticky}, if for any $t \in [0,T)$ and $\delta >0$,
\begin{equation*}
\mathbf{P}[ S^\star_t<\delta | \F_t] >0 \textrm{ a.s.}
\end{equation*}
\end{defn}

\begin{rem}
We stress that $t$ in Definition \ref{stickiness} is \emph{non-random}.
This definition of stickiness is, however, equivalent to the concept of \emph{joint stickiness} introduced in \cite[Definition 2]{SV}, by Lemma \ref{strong-stickiness}, below. When $d=1$, Definition \ref{stickiness} is also equivalent to Guasoni's original one-dimensional definition of stickiness \cite[Definition 2.2]{Gua} by Proposition 1 of \cite{BS} and Lemma \ref{strong-stickiness}.
\end{rem}

\begin{rem}\label{CFS}
Recall that the process $S$ has \emph{conditional full support} (CFS) if for any $t \in [0,T)$, $\delta>0$, and for any continuous function $f : [0,T] \rightarrow \R^d$ such that $f(0)=0$, it holds that
\begin{equation*}
\mathbf{P}\bigg[ \sup_{u \in [t,T]} \|S_u - S_t -f(u-t)\| < \delta \bigg| \mathscr{F}_t \bigg] >0
\end{equation*}
a.s.\ on $\{ f(u-t)+S_t \in \R^d_+ \textrm{ for all $u \in [t,T]$}\}$. Intuitively, the process $(S_u)_{u \in [t,T]}$ then ``sticks'' to any random function of the form $f(\cdot - t)+S_t$ with positive conditional probability. The CFS property is clearly a stronger requirement than stickiness, which entails that $S$ ``sticks'' merely to its current value with positive conditional probability (that is, $f=0$). The CFS property has been recently established for a wide selection of processes, see \cite{C,GSvZ,Gua1,GR,HPR,P2010,P2011}. 
In particular, many Gaussian processes, including \emph{fractional Brownian motion}, have CFS.

It is worth pointing out that stickiness is obviously preserved under composition of the process $S$ with a continuous function, which is not true in general for the CFS property. Due to this observation one can easily 
construct an abundance of sticky processes by composing, e.g., a process having CFS with a continuous function.
\end{rem}

We can now state our main result, deferring its proof until Section \ref{sec:proof}.

\begin{theorem}[Consistent price systems]\label{CPS-existence}
If $S$ is sticky, then it admits an $\varepsilon$-CPS for any $\varepsilon>0$.
\end{theorem}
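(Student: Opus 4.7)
The plan is to follow the two-step strategy of \cite{Gua1, KS}: discretize time via stopping times capturing the $\eta$-oscillations of $S$, construct a discrete-time martingale $\eta$-close to $S$ at these times under a measure $\mathbf{Q} \sim \mathbf{P}$, and extend to continuous time by Doob's conditional expectation. Fix $\varepsilon > 0$ and localize so that the paths of $S$ almost surely remain in a compact subset of $\R^d_+$; then choose $\eta>0$ small enough that any process sup-norm $C\eta$-close to $S$ lies in the $\varepsilon$-CPS tube. Define $\tau_0 := 0$ and $\tau_{n+1} := \inf\{t > \tau_n : \|S_t - S_{\tau_n}\| \geq \eta\} \wedge T$; continuity of $S$ on the compact interval $[0,T]$ ensures that $(\tau_n)$ almost surely stabilizes at $T$ after a finite (random) number of steps $N$.

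The main work is the inductive construction of an $(\F_{\tau_n})$-adapted process $(\hat M_n)_{n=0}^{N}$ that is a $\mathbf{Q}$-martingale with $\|\hat M_n - S_{\tau_n}\| \leq \eta$. The principal input is stickiness, which via Lemma \ref{strong-stickiness} guarantees that the ``stay-put'' event $A_n := \{\sup_{u \in [\tau_n, T]} \|S_u - S_{\tau_n}\| < \eta/2\}$ has positive conditional probability on $\{\tau_n < T\}$. On $A_n$ one has $\tau_{n+1} = T$ and $\|S_{\tau_{n+1}} - S_{\tau_n}\| < \eta/2$, which provides, after an appropriate reweighting by a conditional Radon--Nikodym density, enough mass near $S_{\tau_n}$ to balance the conditional expectation of $\hat M_{n+1}$. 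Having built $(\hat M_n)$, set $M_t := \E^{\mathbf{Q}}[\hat M_N | \F_t]$; then $M_{\tau_n} = \hat M_n$ lies within $\eta$ of $S_{\tau_n}$, and combining the bound $\sup_{u \in [\tau_n, \tau_{n+1}]} \|S_u - S_{\tau_n}\| \leq \eta$ from the stopping-time definition with a $\mathbf{Q}$-martingale oscillation estimate on $[\tau_n, \tau_{n+1}]$ yields $\|M_t - S_t\| \leq C\eta$ for all $t \in [0,T]$, giving the desired $\varepsilon$-CPS property.

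The hard part is the inductive step for $(\hat M_n)$. Under the CFS assumption of \cite{Gua1}, the conditional distribution of $S_{\tau_{n+1}} - S_{\tau_n}$ has full support on $\overline{B(0,\eta)}$, so $0$ lies in the interior of the conditional convex hull and a Dalang--Morton--Willinger-type argument immediately produces a martingalizing change of measure with $\hat M_n = S_{\tau_n}$. Stickiness only gives $0$ in the conditional support of this increment, which can lie on the boundary of its convex hull --- for instance when the conditional distribution happens to be asymmetric about $0$. The novel modification I would pursue is to not insist on $\hat M_n = S_{\tau_n}$ but to allow a small $\F_{\tau_n}$-measurable offset, and to use the positive-probability event $A_n$ from stickiness as the pivot around which the change of measure is constructed: upweighting $A_n$ places conditional $\mathbf{Q}$-mass arbitrarily close to $S_{\tau_n}$, which can be used to center $\hat M_{n+1}$ against any conditional drift of $S_{\tau_{n+1}}$ on $A_n^c$. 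Quantifying this to maintain the invariant $\|\hat M_n - S_{\tau_n}\| \leq \eta$ uniformly over the induction, while preserving the equivalence $\mathbf{Q} \sim \mathbf{P}$, is the technical crux.
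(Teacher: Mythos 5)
Your architecture is the right one (stopping-time skeleton, discrete-time equivalent martingale construction, extension by conditional expectation), and you correctly identify the crux: under mere stickiness, $0$ may sit on the boundary of the convex hull of the conditional support of the increment (as for the strictly increasing Example \ref{mono}), so no reweighting alone can produce a martingale and one must allow the approximating process to deviate from the skeleton. But your proposal stops exactly where the actual work begins, and the mechanism you sketch --- a single ``stay-put'' pivot event $A_n$, upweighted to ``center $\hat M_{n+1}$ against the drift'' --- is not sufficient. An equivalent change of measure can never repair a support defect, only a mean defect: you need the conditional support of the \emph{modified} increment to contain $0$ in the relative interior of its convex hull. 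A single offset attached to the single event $A_n$ contributes at most one extra point to that support, which in general (and certainly for $d\geq 2$, where the exit of $S$ from the band may be conditionally concentrated in a halfspace or on a lower-dimensional set) does not place $0$ in the relative interior. The paper's construction resolves precisely this: the stay-put event $\{\tau_n=T,\tau_{n-1}<T\}$ is partitioned into $2d+1$ subevents $C_n^i$ defined by slicing the magnitude $S^\star_{\tau_{n-1}}$ (see \eqref{cdef}), the synthetic increments $0$ and $\pm\delta_{n-1}e_i$ are assigned on these pieces (see \eqref{increment-def}), and --- this is a separate, nontrivial step you do not have --- each piece is shown to carry positive $\mathscr{F}_{\tau_{n-1}}$-conditional probability (Lemma \ref{prob}), via an auxiliary hitting time of an intermediate sphere combined with stickiness at stopping times (Lemma \ref{strong-stickiness}). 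Without an analogue of this partition-and-positivity argument, your invariant $\|\hat M_n - S_{\tau_n}\|\leq\eta$ cannot be propagated: the inductive step may simply be infeasible, not merely hard to quantify.

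Two further gaps. First, the number of steps is a.s.\ finite but unbounded, so one cannot just ``build $\hat M_0,\dots,\hat M_N$ and condition on $\F_t$''; one needs a single $\mathbf{Q}\sim\mathbf{P}$ obtained from an infinite product of conditional densities that stays equivalent and makes $X$ a closable martingale. The paper gets this from Lemma \ref{keylem} (Kabanov--Stricker), whose hypotheses --- increments eventually zero, and a positive conditional probability of an \emph{exactly zero} increment given a nonzero previous one --- are exactly what the piece $C_n^1$ and the stopping-time structure are designed to deliver; your sketch contains no substitute for condition \eqref{c-3}. Second, your reduction to an additive tube (``localize so that the paths of $S$ remain in a compact subset of $\R^d_+$'' and pick a deterministic $\eta$) is unjustified: $S$ need not be uniformly bounded away from $0$ and $\infty$, the relevant events are not $\F_0$-measurable, and the CPS bounds are proportional, not additive. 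The paper avoids localization altogether by using multiplicative exit bands for the $\tau_n$ and a perturbation size $\delta_{n-1}$ proportional to $S_{\tau_{n-1}}$, which is what makes the $(1+\varepsilon)^4$ estimate and hence the $\varepsilon$-CPS property come out directly.
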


Curiously, one of the implications of Theorem \ref{CPS-existence} is that there exist examples of a process with monotonic, smooth paths that can be approximated, with arbitrary precision, by a process that transforms into a martingale under an equivalent probability measure --- which is perhaps a bit counterintuitive from a probabilistic point of view. 

\begin{ex}[Strictly increasing process]\label{mono}
Suppose that $(B_t)_{t \in [0,T]}$ is a standard (one-dimensional) Brownian motion and $s_0>0$ is a constant. Let us consider the process
\begin{equation*}
S_t \eqdefl s_0 + \int_0^t |B_s| \ud s, \quad t \in [0,T].
\end{equation*}
Clearly, $S$ has strictly increasing, continuously differentiable paths. Thus, neither the CFS property  
(see Remark \ref{CFS}) 
nor the criterion of Kabanov and Stricker \cite{KS} is in force.
However, using the independence and stationarity of increments of $B$, and their full support property \cite[Corollary VIII.2.3]{RY}, it is straightforward to show that $S$ is sticky. Thus, by Theorem \ref{CPS-existence}, there exists for any $\varepsilon>0$ a probability measure $\mathbf{Q} \sim \mathbf{P}$ and a $\mathbf{Q}$-martingale $M$ such that $|M_t/S_t -1 | \leq \varepsilon$ a.s.\ for all $t \in [0,T]$. Note that the monotonicity and differentiability of $S$ is preserved under $\mathbf{Q}$ since $\mathbf{Q} \sim \mathbf{P}$.
\end{ex}

\section{Proof of Theorem \ref{CPS-existence}}\label{sec:proof}

We will first introduce some additional notation and concepts that are crucial in what follows.
Suppose that $E \subset \R^d$. Recall that the \emph{convex hull} of $E$, denoted by $\conv E$, is the smallest convex subset of $\R^d$ that contains $E$. Moreover, the \emph{relative interior} of $E$, denoted by $\ri E$, is the interior of $E$ in the relative topology of the smallest affine subset of $\R^d$ that contains $E$ (the so-called \emph{affine hull} of $E$). We use $\interior E$ for the (ordinary) interior of $E$.  The \emph{support} of a probability measure $\mu$ on $\R^d$, denoted by $\supp \mu$ is the smallest closed set $E \subset \R^d$ such that $\mu(E) = 1$. The (regular) \emph{conditional law} of a random vector $X$ in $\R^d$, defined on $(\Omega,\mathscr{F},\mathbf{P})$, with respect to a $\sigma$-algebra $\mathscr{G} \subset \mathscr{F}$ will be denoted by $\mathscr{L}(X|\mathscr{G})$. We use the standard convention that $\inf \varnothing = \infty$.

As a preparation for the proof of Theorem \ref{CPS-existence}, we show that stickiness implies a (seemingly) stronger property, obtained by replacing the deterministic time $t$ and radius $\delta$ of Definition \ref{stickiness} with a stopping time $\tau$ and an $\mathscr{F}_\tau$-measurable random radius $\eta$, respectively. This result is analogous to Proposition 2.9 of \cite{Gua1}, which says that the CFS property implies the strong CFS property, and its proof is, in fact, carried out in a similar fashion (cf.\ also Lemma 2.2 of \cite{MMS}).

\begin{lem}\label{strong-stickiness}
If $S$ is sticky, then for any stopping time $\tau \in [0,T]$ and $\mathscr{F}_\tau$-measurable random variable $\eta\geq 0$, 
\begin{equation*}
\mathbf{P}[ S^\star_\tau < \eta |\mathscr{F}_\tau]>0 \quad \textrm{a.s. on $\{\eta > 0 \}$.}
\end{equation*}
\end{lem}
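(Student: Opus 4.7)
My plan is to strengthen stickiness in three stages: (i) allow an $\F_t$-measurable random radius at a deterministic $t$, (ii) treat finite-valued stopping times, and (iii) approximate a general $\tau$ from above by dyadic stopping times. Stages (i) and (ii) are essentially bookkeeping; the substantive work is in (iii).

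\emph{Stage (i).} If $t \in [0,T)$ and $\eta\geq 0$ is $\F_t$-measurable, decompose $\{\eta>0\} = \bigcup_{k\in\N}\{\eta\geq 1/k\}$ and exploit the pathwise inclusion $\{S^\star_t<1/k\}\subseteq\{S^\star_t<\eta\}$ on $\{\eta\geq 1/k\}$; combined with Definition \ref{stickiness} at $\delta=1/k$, this gives $\mathbf{P}[S^\star_t<\eta\mid\F_t]>0$ a.s.\ on $\{\eta\geq 1/k\}$, and the union over $k$ covers $\{\eta>0\}$. \emph{Stage (ii).} If $\tau$ takes values in $\{t_1,\ldots,t_m\}\subseteq[0,T]$, then on $\{\tau=t_j\}\in\F_\tau$ we have $\F_\tau=\F_{t_j}$ and $\eta$ is $\F_{t_j}$-measurable, so stage (i) applies on each piece (with $t_j=T$ trivial because $S^\star_T=0$).

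\emph{Stage (iii).} Set $\tau_n \eqdefl (\lceil 2^n\tau\rceil/2^n)\wedge T$, which are finite-valued stopping times with $\tau_n\downarrow\tau$, and define the shrunken radius
\begin{equation*}
\eta_n \eqdefl \tfrac{\eta}{3} - \sup_{u\in[\tau,\tau_n]}\|S_u-S_\tau\|, \qquad B_n \eqdefl \{\eta_n>0\} \in \F_{\tau_n},
\end{equation*}
both $\F_{\tau_n}$-measurable because $\F_\tau\subseteq\F_{\tau_n}$ and the supremum on the stochastic interval $[\tau,\tau_n]$ is $\F_{\tau_n}$-measurable by continuity of $S$. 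Continuity gives $\eta_n\to\eta/3$ a.s., so $\mathbf{1}_{B_n}\to 1$ a.s.\ on $\{\eta>0\}$. A triangle-inequality check yields the pathwise inclusion $\{S^\star_{\tau_n}<\eta_n\}\cap B_n\subseteq\{S^\star_\tau<\eta\}$, and stage (ii) applied to the pair $(\tau_n,\eta_n)$ gives $\mathbf{P}[S^\star_{\tau_n}<\eta_n\mid\F_{\tau_n}]>0$ a.s.\ on $B_n$. Combining these,
\begin{equation*}
\mathbf{1}_{B_n}\,\mathbf{P}[S^\star_\tau<\eta\mid\F_{\tau_n}] \geq \mathbf{1}_{B_n}\,\mathbf{P}[S^\star_{\tau_n}<\eta_n\mid\F_{\tau_n}] > 0 \quad \text{a.s.\ on $B_n$.}
\end{equation*}

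The main obstacle is transferring strict positivity from $\F_{\tau_n}$ back down to $\F_\tau$. For this I would use the elementary principle: if $X\geq 0$ and $X>0$ a.s.\ on $B\in\F_{\tau_n}$, then $\mathbf{E}[X\mathbf{1}_B\mid\F_\tau]>0$ a.s.\ on $\{\mathbf{P}[B\mid\F_\tau]>0\}$ (on any $\F_\tau$-set $A$ where it vanishes one has $X\mathbf{1}_{A\cap B}=0$ a.s., whence $\mathbf{P}[A\cap B]=0$). Applied with $X=\mathbf{P}[S^\star_\tau<\eta\mid\F_{\tau_n}]$ and $B=B_n$, the tower property gives
\begin{equation*}
\mathbf{P}[S^\star_\tau<\eta\mid\F_\tau] \geq \mathbf{P}[\{S^\star_\tau<\eta\}\cap B_n\mid\F_\tau] > 0 \quad \text{a.s.\ on $\{\mathbf{P}[B_n\mid\F_\tau]>0\}$.}
\end{equation*}
Finally, conditional Fatou yields $\liminf_n\mathbf{P}[B_n\mid\F_\tau]\geq\mathbf{1}_{\{\eta>0\}}$ a.s., so for $\mathbf{P}$-a.e.\ $\omega\in\{\eta>0\}$ some $n$ satisfies $\mathbf{P}[B_n\mid\F_\tau](\omega)>0$, completing the proof.
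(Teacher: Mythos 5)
Your proof is correct, but it takes a genuinely different route from the paper's. The paper argues by contraposition: assuming the conclusion fails on a set $A\in\F_\tau$ of positive probability contained in $\{\eta>0\}$, it picks $n_0$ with $\prob[A\cap\{\eta>2/n_0\}]>0$, uses continuity of $S$ to find a rational $q_0$ such that $\tau<q_0$ and the path stays within $1/n_0$ of $S_\tau$ on $[\tau,q_0]$ with positive probability, and a triangle-inequality argument converts the assumed conditional certainty of $\{S^\star_\tau\geq\eta\}$ into $\prob[S^\star_{q_0}\geq 1/n_0\mid\F_{q_0}]=1$ a.s.\ on a set of positive probability, contradicting stickiness at the deterministic time $q_0$. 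You instead give a direct proof: reduce to deterministic times (stages (i)--(ii)), approximate $\tau$ from above by finite-valued dyadic stopping times, run the same triangle-inequality mechanism through the shrunken radius $\eta_n$, and then transfer strict positivity from $\F_{\tau_n}$ back to $\F_\tau$ via your elementary positivity-transfer principle together with conditional Fatou. Both proofs rest on the same core observation --- if the path has moved little between $\tau$ and a later grid time, then small future oscillation from the grid time forces $S^\star_\tau$ to be small --- but the contrapositive formulation lets the paper avoid the positivity-transfer and Fatou steps entirely (it only needs two sandwiching identities for unconditional probabilities), so it is shorter; your direct version is longer but avoids argument by contradiction, and its intermediate statements (random radius at a fixed time, discrete stopping times) are cleanly reusable. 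One cosmetic point: $\eta_n$ may be negative, so formally apply stage (ii) to $\max(\eta_n,0)$, or drop the sign restriction in stages (i)--(ii); since your conclusion is only asserted on $B_n=\{\eta_n>0\}$, nothing changes.
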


\begin{proof}
To exclude a triviality, we assume that $\mathbf{P}[\eta>0]>0$. 
We use a contrapositive argument and suppose that the assertion is not true, that is, there is a stopping time $\tau\in [0,T]$, an $\mathscr{F}_\tau$-measurable random variable $\eta$, and $A \in\mathscr{F}_\tau$ such that $A \subset \{ \eta>0 \}$, $\mathbf{P}[A]>0$, and
\begin{equation}\label{contraposition}
\mathbf{P}[ S^\star_\tau \geq \eta |\mathscr{F}_\tau]=1 \quad \textrm{on $A$.}
\end{equation}
Note that, necessarily, $A \subset \{\tau < T\}$. Since $A \subset \{ \eta>0 \}$, we may write $A = \bigcup_{n\in \mathbb{N}} A'_n$, where $A'_n \eqdefl A \cap \{ \eta > 2/n \} \in \mathscr{F}_\tau$. The property $\mathbf{P}[A]>0$ implies then that $\mathbf{P}[A'_{n_0}]>0$ for some $n_0 \in \mathbb{N}$. Further, by the continuity of $S$, we have $A'_{n_0} = \bigcup_{q \in [0,T)\cap \Q} A''_q$, where
\begin{equation*}
A''_q \eqdefl A'_{n_0} \cap \{\tau < q \} \cap \bigg\{\sup_{t \in [\tau,q]} \|S_t -S_\tau \|< \frac{1}{{n_0}}\bigg\} \in \mathscr{F}_q.
\end{equation*}
Thus, $\mathbf{P}[A'_{n_0}]>0$ implies that $\mathbf{P}[A''_{q_0}]>0$ for some $q_0  \in [0,T)\cap \Q$. 

The property \eqref{contraposition} implies that $\mathbf{1}_{ \{S^\star_\tau \geq \eta \}} = 1$ a.s.\ on $A$. Thus, recalling that $A''_{q_0} \subset A$,
\begin{equation}\label{lower}
\mathbf{P}[A''_{q_0}] = \mathbf{E}[\mathbf{1}_{A''_{q_0}}\mathbf{1}_{ \{S^\star_\tau \geq \eta \}}]=\mathbf{P}[A''_{q_0} \cap \{S^\star_\tau \geq \eta \}].
\end{equation}
Since we have the inclusion
\begin{equation*}
A''_{q_0} \cap \{ S^\star_\tau \geq \eta \} \subset A''_{q_0} \cap \{ S^\star_{q_0} \geq 1/{n_0} \},
\end{equation*}
it follows that
\begin{equation}\label{upper}
\mathbf{P}[A''_{q_0}\cap\{S^\star_\tau \geq \eta \}] \leq \mathbf{P}[A''_{q_0} \cap \{ S^\star_{q_0} \geq 1/{n_0} \}] \leq \mathbf{P}[A''_{q_0}].
\end{equation}
Combining \eqref{lower} and \eqref{upper}, we obtain
\begin{equation*}
\mathbf{P}[S^\star_{q_0} \geq 1/{n_0} | \mathscr{F}_{q_0}] = 1 \quad \textrm{a.s.\ on $A''_{q_0}$,}
\end{equation*}
which implies that $S$ is not sticky.
\end{proof}

Similarly to the paper by Kabanov and Stricker \cite{KS}, the key ingredient in the proof of our Theorem \ref{CPS-existence} is the following result (Lemma 1 in the revised version of \cite{KS}) that gives sufficient conditions for the existence of an equivalent martingale measure for a discrete parameter process with infinite time horizon (see also Remark \ref{alternative}, below).

\begin{lem}[Equivalent martingale measure] \label{keylem}
Let $(\mathscr{G}_n)_{n \geq 0}$ be a filtration on $(\Omega,\mathscr{F},\mathbf{P})$ and let $(X_n)_{n \geq 0}$ be a discrete-parameter, $d$-dimensional process adapted to $(\mathscr{G}_n)_{n \geq 0}$. Further, write $\xi_n \eqdefl X_n - X_{n-1}$ for any $n \in \mathbb{N}$.
If
\begin{enumerate}[label=(\roman*),ref=\roman*]
\item\label{c-1} $0\in \ri \conv \supp \mathscr{L}(\xi_{n}|\mathscr{G}_{n-1})$ a.s.\ for any $n \in \mathbb{N}$,
\item\label{c-2} $\mathbf{1}_{\{ \xi_{n}=0\}}\uparrow 1$ a.s.\ when $n \uparrow \infty$,
\item\label{c-3} $\mathbf{P}[\xi_{n}=0|\mathscr{G}_{n-1}]>0$ a.s.\ on $\{ \xi_{n-1}\neq0\}$ for any $n \in \mathbb{N}$,
\end{enumerate}
then there exists a probability measure $\mathbf{Q} \sim \mathbf{P}$ such that $X$ is a $\mathbf{Q}$-martingale and bounded in $L^2(\mathbf{Q})$.
\end{lem}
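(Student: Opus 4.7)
The plan is to construct the equivalent martingale measure $\mathbf{Q}$ by defining its density $Z_\infty$ as the pointwise limit of an infinite product $Z_n = \prod_{k=1}^{n} \zeta_k$ of $\mathscr{G}_k$-measurable, strictly positive factors $\zeta_k$. At each step $k$, I would demand $\mathbf{E}[\zeta_k \mid \mathscr{G}_{k-1}] = 1$ (so that $(Z_n)$ is a nonnegative $\mathbf{P}$-martingale with mean $1$) and $\mathbf{E}[\zeta_k \xi_k \mid \mathscr{G}_{k-1}] = 0$ (so that $X$ becomes a martingale under $\mathbf{Q} := Z_\infty \cdot \mathbf{P}$). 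The existence of such $\zeta_k$ on $\{\xi_{k-1}\neq 0\}$ is where condition (\ref{c-1}) enters: since $0 \in \ri \conv \supp \mathscr{L}(\xi_k \mid \mathscr{G}_{k-1})$, Carathéodory's theorem applied in the affine hull yields, pointwise in $\omega$, at most $d+1$ support points $v_1, \ldots, v_m$ with strictly positive weights $\lambda_i$ summing to $1$ such that $\sum_i \lambda_i v_i = 0$. A measurable selection argument then produces $\mathscr{G}_{k-1}$-measurable versions of the $v_i$ and $\lambda_i$, from which I would synthesize $\zeta_k$ by re-weighting the $\mathbf{P}$-conditional law of $\xi_k$ on disjoint small neighborhoods of the $v_i$.

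To secure $L^2(\mathbf{Q})$-boundedness, I would refine the construction so that under $\mathbf{Q}$ the conditional distribution of $\xi_k$ given $\mathscr{G}_{k-1}$ is supported in a ball of radius $\delta_k$, with $(\delta_k)$ a deterministic sequence satisfying $\sum_k \delta_k^2 < \infty$. Condition (\ref{c-3}) is the crucial tool here: on $\{\xi_{k-1}\neq 0\}$ the atom $\{\xi_k=0\}$ carries strictly positive $\mathscr{G}_{k-1}$-conditional mass under $\mathbf{P}$, so I can divert the mass lying outside the chosen $\delta_k$-neighborhoods of the $v_i$ to this atom --- which contributes nothing to $\|\xi_k\|^2$ --- while still tilting by strictly positive, bounded factors that realize the desired mixture. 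On $\{\xi_{k-1}=0\}$, condition (\ref{c-2}) forces $\xi_k = 0$ as well, so I can simply set $\zeta_k = 1$ there. Consequently, for each $\omega$ there is a random index $n^\star(\omega) < \infty$ after which $\zeta_k(\omega) = 1$; the product $Z_n$ therefore stabilizes pointwise to a strictly positive limit $Z_\infty$, and $\mathbf{E}[Z_\infty] = 1$ follows from dominated/uniform-integrability estimates based on the boundedness of each $\zeta_k$. The martingale property of $X$ under $\mathbf{Q}$ is immediate from the two defining identities for $\zeta_k$, and the orthogonality of increments yields $\mathbf{E}_{\mathbf{Q}}[\|X_n\|^2] \leq \sum_{k=1}^{n} \delta_k^2 < \infty$.

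The main obstacle I anticipate is the second step: producing $\zeta_k$ in a jointly measurable, bounded, and strictly positive manner while simultaneously centering the $\mathbf{Q}$-conditional mean of $\xi_k$ \emph{and} confining its bulk to a $\delta_k$-ball. This forces one to juggle the relative-interior representation of $0$ (whose ambient dimension may vary with $\omega$), the atom at $0$ supplied by condition (\ref{c-3}), and a measurable selection of the support points and weights. Each ingredient is standard in isolation, but assembling them into a single bona fide density that also delivers the prescribed conditional variance bound is the technical heart of the argument.
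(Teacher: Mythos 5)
Your route is the one the paper itself points to: Lemma \ref{keylem} is not proved in the paper but quoted from the revised version of \cite{KS}, and Remark \ref{alternative} sketches exactly the construction you propose --- a density $Z_\infty=\prod_k\zeta_k$ built from conditional factors with $\mathbf{E}[\zeta_k\mid\mathscr{G}_{k-1}]=1$, $\mathbf{E}[\zeta_k\xi_k\mid\mathscr{G}_{k-1}]=0$ and small conditional second moment, the factor being $1$ once the increments vanish, so that the product stabilizes; the existence of such a factor is Lemma 3.1 of \cite{Gua1}. The strategy is therefore sound, but two steps of your sketch are genuinely gapped. First, you cannot arrange that the $\mathbf{Q}$-conditional law of $\xi_k$ is ``supported in a ball of radius $\delta_k$'': since $\mathbf{Q}\sim\mathbf{P}$, the conditional support is unchanged, and the only support point near $0$ may be the atom $\{\xi_k=0\}$ itself. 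What is achievable, and what the argument needs, is the moment bound $\mathbf{E}[\zeta_k\|\xi_k\|^2\mid\mathscr{G}_{k-1}]\le\delta_k^2$; and because the $\mathbf{P}$-conditional law of $\xi_k$ may have no finite second moment at all, ``strictly positive, bounded factors'' off the atom do not suffice --- the reweighting must decay in the tails (e.g.\ proportionally to $e^{-\|\xi_k\|^2}$ there, as in the proof of Lemma 3.1 of \cite{Gua1}). Relatedly, Carath\'eodory's $d+1$ points need not affinely span the affine hull of the conditional support, so the exact centering against the residual (tail plus atom) mass requires points that do span it, together with a measurable selection; this is the actual content of the cited lemma rather than a detail that can be deferred.

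Second, the claim that $\mathbf{E}[Z_\infty]=1$ ``follows from dominated/uniform-integrability estimates based on the boundedness of each $\zeta_k$'' is not a proof: the individual bounds multiply up, and boundedness of each factor gives no uniform integrability of the product. The working argument combines the stabilization you noted --- by \eqref{c-2} and $\zeta_k=1$ on $\{\xi_{k-1}=0\}$ one has $Z_\infty=Z_n$ on $\{\xi_n=0\}$ --- with an extra smallness requirement that must be built into the construction, e.g.\ $\mathbf{E}[\zeta_k\mathbf{1}_{\{\xi_k\neq0\}}\mid\mathscr{G}_{k-1}]\le 2^{-k}$ on $\{\xi_{k-1}\neq0\}$, achievable precisely because \eqref{c-3} supplies an atom at $0$ to carry almost all of the conditional mass. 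This gives $\mathbf{E}[Z_n\mathbf{1}_{\{\xi_n\neq0\}}]\le 2^{-n}$, hence $\mathbf{E}[Z_\infty]\ge\mathbf{E}[Z_n\mathbf{1}_{\{\xi_n=0\}}]\ge1-2^{-n}$, while $\mathbf{E}[Z_\infty]\le1$ by Fatou; note the second-moment bound alone does not yield this, since $\|\xi_k\|$ has no lower bound on $\{\xi_k\neq0\}$. The same tail control is what justifies $\mathbf{E}\big[\prod_{k>n}\zeta_k\mid\mathscr{G}_n\big]=1$, which you use implicitly when calling the $\mathbf{Q}$-martingale property of $X$ ``immediate''. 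Finally, for boundedness in $L^2(\mathbf{Q})$ you must also make $X_0$ square-integrable under $\mathbf{Q}$ (an additional strictly positive $\mathscr{G}_0$-measurable factor with unit expectation does this), since your bound $\sum_k\delta_k^2$ only controls $X_n-X_0$.
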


\begin{rem}
Kabanov and Stricker \cite{KS} write the condition \eqref{c-1} in a different form that requires the process $(X_n)_{n=1}^N$ to be free of arbitrage for any time horizon $N \in \mathbb{N}$.
 For our purposes, the geometric formulation \eqref{c-1} will be more convenient. Both formulations are equivalent by Theorem A* in Chapter V of \cite{Sh}. 
\end{rem}

\begin{proof}[Proof of Theorem \ref{CPS-existence}]
Fix $\varepsilon>0$.
Let us define an increasing sequence $(\tau_n)_{n\geq0}$ of stopping times by setting $\tau_0 \eqdefl 0$ and for any $n \in \mathbb{N}$,
\begin{equation*}
\tau_{n}\eqdefl\inf\bigg\{t\geq \tau_{n-1}: \frac{S^i_t}{S^i_{\tau_{n-1}}} \notin \bigg(\frac{1}{1+\varepsilon},1+\varepsilon\bigg) \textrm{ for some $i = 1,\ldots,d$}\bigg\}\wedge T.
\end{equation*}
As argued in \cite[p.\ 509]{Gua1}, the continuity of the process $S$ ensures that $\mathbf{1}_{\{\tau_n = T\}} \uparrow 1$ a.s.\ when $n\uparrow \infty$. 

Following the proof of Theorem 1 in \cite{KS}, we approximate the skeleton $(S_{\tau_n})_{n \geq 0}$ of the process $S$ by a carefully constructed discrete-parameter process $(X_n)_{n \geq 0}$ that will satisfy the conditions of Lemma \ref{keylem} with $(\mathscr{G}_{n})_{n \geq 0} \eqdefl (\mathscr{F}_{\tau_n})_{n \geq 0}$. 
Deviating from the construction of $X$ in  \cite{Gua1,KS} we first decompose the event 
$\{ \tau_n =T,\tau_{n-1}< T\}$ into $(2d+1)$ disjoint subevents such that each of them has positive 
$\mathscr{F}_{\tau_{n-1}}$-conditional probability, if $S$ moves after $\tau_{n-1}$.
On one of these subevents, we set $X_n - X_{n-1} \eqdefl 0$ to ensure that condition \eqref{c-3} of Lemma \ref{keylem} 
holds.
 On the remaining $2d$ 
subevents, $X_n-X_{n-1}$ will be assigned values given by the $d$ standard basis vectors of $\R^d$ and their opposites, scaled by a small positive factor so that $X$ remains ``close'' to the skeleton $(S_{\tau_n})_{n \geq 0}$.
In this way we can guarantee that $X$ satisfies 
the support condition \eqref{c-1} of Lemma \ref{keylem}.

We proceed now with the rigorous construction of the discrete-parameter process $X$.
For any stopping time $\tau \in [0,T]$, we introduce
\begin{equation}\label{support-maximum}
\overline{S}_\tau  \eqdefl \sup \supp \mathscr{L}(S^\star_\tau| \mathscr{F}_\tau).
\end{equation}
Observe that the random variable $\overline{S}_\tau$ assumes values in $[0,\infty]$ and is $\mathscr{F}_\tau$-measurable since
we can alternatively represent it as
\begin{equation*}
\overline{S}_\tau = \sup_{s \in \Q_+} s \mathbf{1}_{\{ \mathbf{P}[S^\star_\tau \geq s | \mathscr{F}_\tau] > 0\}},
\end{equation*}
where $\{ \mathbf{P}[S^\star_\tau \geq s | \mathscr{F}_\tau] > 0\} \in \mathscr{F}_\tau$ and $\Q_+ \eqdefl \R_+ \cap \Q$.
We set
\begin{equation*}
\delta_{n-1} \eqdefl \min\bigg(\frac{\varepsilon}{1+\varepsilon}S^1_{\tau_{n-1}},\ldots,\frac{\varepsilon}{1+\varepsilon}S^d_{\tau_{n-1}},\overline{S}_{\tau_{n-1}}\bigg),
\end{equation*}
which is an $\mathscr{F}_{\tau_{n-1}}$-measurable (finite) random variable, and 
\begin{equation}\label{cdef}
C_{n}^{i} \eqdefl \bigg\{ S^\star_{\tau_{n-1}} \in \bigg[\frac{i-1}{2d+1}\delta_{n-1},\frac{i}{2d+1}\delta_{n-1}\bigg) \bigg\}\cap \big\{\overline{S}_{\tau_{n-1}}>0 \big\}\subset \{\tau_n = T,\tau_{n-1}<T\},
\end{equation}
with $n\in \mathbb{N}$ and $i \in \{1,\ldots,2d+1\}$. 
 Below, in Lemma \ref{prob}, we show these events have positive $\mathscr{F}_{\tau_{n-1}}$-conditional probabilities, as a consequence of the stickiness of the process $S$. We define $X$ to be the $(\mathscr{F}_{\tau_n})_{n \geq 0}$-adapted process
\begin{equation*}
X_n \eqdefl S_0 + \sum_{k=1}^n \xi_k, \quad n \geq 0,
\end{equation*}
where
\begin{equation}\label{increment-def}
\xi_n \eqdefl (S_{\tau_{n}} - S_{\tau_{n-1}})\mathbf{1}_{\{ \tau_n <T\}} + \sum_{i=1}^{d}  \delta_{n-1}(\mathbf{1}_{ C_{n}^{2i}}-\mathbf{1}_{ C_{n}^{2i+1}})e_i \quad \textrm{for any $n \in \mathbb{N}$,}
\end{equation}
using $e_i$ to denote the $i$-th standard basis vector of $\R^d$. 

For any $n \in \mathbb{N}$, the second term in \eqref{increment-def} ensures that $\xi_n = \delta_{n-1} e_i$ on $C^{2i}_n$ and $\xi_n = -\delta_{n-1} e_i$ on $C^{2i+1}_n$ with $i \in \{1,\ldots,d\}$ and, moreover, $\xi_n = 0$ on $C^{1}_n$.
To verify condition \eqref{c-1} of Lemma \ref{keylem}, note that
\begin{equation}\label{atom}
\mathrm{supp}\,\mathscr{L}(\xi_n|\mathscr{F}_{\tau_{n-1}}) = \{0\} \quad \textrm{a.s.\ on $\big\{\overline{S}_{\tau_{n-1}}=0\big\}$},
\end{equation}
whereas
\begin{equation}\label{all-directions}
\mathrm{supp}\,\mathscr{L}(\xi_n|\mathscr{F}_{\tau_{n-1}}) \supset \{ \pm\delta_{n-1}e_i : i = 1,\ldots,d\} \quad \textrm{a.s.\ on $\big\{\overline{S}_{\tau_{n-1}}>0\big\}$}
\end{equation}
by the construction of $\xi_n$ on the sets $C^i_n$ and Lemma \ref{prob} below. Thus, it follows that
\begin{equation*}
0 \in \mathrm{ri}\,\mathrm{conv}\,\mathrm{supp}\,\mathscr{L}(\xi_n|\mathscr{F}_{\tau_{n-1}})\quad \textrm{a.s.}
\end{equation*}
The condition \eqref{c-2} of Lemma \ref{keylem} is satisfied since $\{\xi_{n+1} = 0 \} \supset \{\xi_n = 0 \} \supset \{\tau_{n-1}=T\}$ and
\begin{equation*}
\mathbf{1}_{\{\xi_n=0\}} \geq \mathbf{1}_{\{\tau_{n-1}=T\}}\uparrow 1 \quad \textrm{a.s.\ when $n \uparrow \infty$.}
\end{equation*}
Finally, condition \eqref{c-3} of Lemma \ref{keylem} follows from \eqref{atom} and the inequalities
\begin{equation}\label{zero-possible}
\mathbf{P}[\xi_n = 0 |\mathscr{F}_{\tau_{n-1}}]\geq \mathbf{P}[C^1_{n}|\mathscr{F}_{\tau_{n-1}}]>0 \quad \textrm{a.s.\ on $\big\{\overline{S}_{\tau_{n-1}}>0\big\}$,}
\end{equation}
the latter of which comes from Lemma \ref{prob} below.

The construction of a CPS can now be carried out using a standard method (cf.\ \cite{Gua1,KS}).
By Lemma \ref{keylem}, there exists a probability measure $\mathbf{Q} \sim \mathbf{P}$ such that $X$ is a uniformly integrable $\mathbf{Q}$-martingale, thus closable at $\infty$. We may then define a $d$-dimensional continuous-parameter $\mathbf{Q}$-martingale $M$ by setting $M_t \eqdefl \mathbf{E}_{\mathbf{Q}}[X_\infty|\mathscr{F}_t]$ for any $t \in [0,T]$. By the optional sampling theorem, $ M_{\tau_n}=X_n$ a.s.\ for any $n \geq 0$. For the remainder of the proof, let us fix arbitrary $i \in \{1,\ldots,d\}$ and $t \in [0,T]$. The processes $X$ and $M$ satisfy, by construction, for any $n \geq 0 $,
\begin{equation}\label{eps-bound1}
\frac{1}{(1+\varepsilon)^2} \leq \frac{X^i_n}{S^i_{\tau_{n}}} = \frac{M^i_{\tau_n}}{S^i_{\tau_{n}}} \leq (1+ \varepsilon)^2 \quad \textrm{a.s.}
\end{equation}
It is shown in \cite[p.\ 510]{Gua1} that
\begin{equation}\label{eps-bound2}
\frac{1}{(1+\varepsilon)^2} \leq \frac{S^i_{\rho_t}}{S^i_{t}} \leq (1+ \varepsilon)^2\quad \textrm{a.s.,}
\end{equation}
where $\rho_t \eqdefl \min \{ \tau_n : \tau_n > t\}$, which is a stopping time. It follows then from \eqref{eps-bound1} and \eqref{eps-bound2} that
\begin{equation*}
\frac{1}{(1+\varepsilon)^4} \leq \frac{M^i_{\rho_t}}{S^i_t} = \frac{M^i_{\rho_t}}{S^i_{\rho_t}}\frac{S^i_{\rho_t}}{S^i_{t}} \leq (1+ \varepsilon)^4\quad \textrm{a.s.}
\end{equation*}
By the optional sampling theorem, $M^i_t / S^i_t = \mathbf{E}_{\mathbf{Q}}[M^i_{\rho_t}/S^i_t | \mathscr{F}_t]$, so we find that
\begin{equation*}
\frac{1}{(1+\varepsilon)^4} \leq \frac{M^i_t}{S^i_t} \leq (1+\varepsilon)^4\quad \textrm{a.s.,}
\end{equation*}
and we can conclude the proof by adjusting $\varepsilon$ appropriately.
\end{proof}

\begin{rem}\label{difference}
The difference between the proof of Theorem \ref{CPS-existence} and the earlier proofs of existence of CPSs under stronger assumptions in the papers \cite{Gua1,KS} is indeed that we introduce the second term in the definition of the increment \eqref{increment-def}. This term is indispensable as it ensures that the support condition \eqref{c-1} of Lemma \ref{keylem} holds even when $S$ is merely sticky. Namely, under stickiness the condition \eqref{c-1} might otherwise fail as it is possible, for example, that
\begin{equation*}
 \{0\} \neq \supp\mathscr{L}(S_{\tau_n}-S_{\tau_{n-1}}| \mathscr{F}_{\tau_{n-1}}) \subset[0,\infty)^d  \quad \textrm{a.s.\ on $\{\tau_{n-1}<T\}$}
\end{equation*}
(like in the case of Example \ref{mono}).
\end{rem}

\begin{rem}\label{alternative}
The probability measure $\mathbf{Q}$ in the proof of Theorem \ref{CPS-existence} can be alternatively constructed using Lemma 3.1 of \cite{Gua1}, instead of applying Lemma \ref{keylem}. More precisely, Lemma 3.1 of \cite{Gua1} can be applied sequentially to
\begin{equation*}
X \eqdefl \xi_{n}, \quad A \eqdefl \big\{\overline{S}_{\tau_{n-1}}=0\big\}, \quad \mathscr{G} \eqdefl \mathscr{F}_{\tau_{n-1}}, \quad \mathscr{H} \eqdefl \mathscr{F}_{\tau_{n}}, \quad \eta \eqdefl \frac{1}{2^n},
\end{equation*}
for any $n \in \mathbb{N}$, since \eqref{atom} and \eqref{zero-possible} hold, and since the inclusion \eqref{all-directions} ensures that, in fact,
\begin{equation*}
0 \in \mathrm{int}\,\mathrm{conv}\,\mathrm{supp}\,\mathscr{L}(\xi_n|\mathscr{F}_{\tau_{n-1}})\quad \textrm{a.s.\ on $\big\{\overline{S}_{\tau_{n-1}}>0\big\}$.}
\end{equation*}
As in \cite[p.\ 509]{Gua1}, the probability measure $\mathbf{Q}$ can then be defined through the density $L = \prod_{n=1}^\infty Z_n$, where $Z_n$, $n \in \mathbb{N}$, are the conditional densities obtained from Lemma 3.1 of \cite{Gua1}. The required properties of $\mathbf{Q}$ can be verified following \cite[pp.\ 509--510]{Gua1}. To this end, it is useful to note that
\begin{equation*}
\mathbf{1}_{\{\xi_{n-1}=0\}} \leq \mathbf{1}_{\big\{\overline{S}_{\tau_{n-1}}=0\big\}} \leq \mathbf{1}_{\{\xi_{n}=0\}}\quad \textrm{a.s.\ for any $n \in \mathbb{N}$.}
\end{equation*}
\end{rem}

It remains to prove the following lemma, used above in the proof of Theorem \ref{CPS-existence}, which a consequence of the stickiness of the process $S$.

\begin{lem}\label{prob}
For any $n \in \mathbb{N}$ and $i \in \{1,\ldots,2d+1\}$,
\begin{equation*}
\mathbf{P}[C_{n}^i|\mathscr{F}_{\tau_{n-1}}] >0 \quad \textrm{a.s.\ on $\big\{\overline{S}_{\tau_{n-1}}>0\big\}$,}
\end{equation*}
where $C_n^i$ and $\overline{S}_{\tau_{n-1}}$ are given by \eqref{cdef} and \eqref{support-maximum}, respectively.
\end{lem}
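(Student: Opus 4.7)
The plan is to treat $i = 1$ separately (where Lemma~\ref{strong-stickiness} applies directly) and reduce $i \geq 2$ to Lemma~\ref{strong-stickiness} applied at a hitting time. Throughout I set $a \eqdefl (i-1)\delta_{n-1}/(2d+1)$ and $b \eqdefl i\delta_{n-1}/(2d+1)$, and work on the $\mathscr{F}_{\tau_{n-1}}$-measurable event $\{\overline{S}_{\tau_{n-1}}>0\}$, on which $\delta_{n-1}>0$. For $i=1$ we have $a=0$, so $C_n^1 \cap \{\overline{S}_{\tau_{n-1}}>0\} = \{S^\star_{\tau_{n-1}} < \delta_{n-1}/(2d+1)\}$, and conditional positivity follows immediately from Lemma~\ref{strong-stickiness} with $\tau = \tau_{n-1}$ and $\eta = \delta_{n-1}/(2d+1)$.

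For $i \geq 2$, I introduce the stopping time
\begin{equation*}
\sigma \eqdefl \inf\{t \geq \tau_{n-1} : \|S_t - S_{\tau_{n-1}}\| \geq a\}\wedge T,
\end{equation*}
and establish the geometric inclusion
\begin{equation*}
\{\sigma < T\} \cap \{S^\star_{\sigma} < b - a\} \subset \{S^\star_{\tau_{n-1}} \in [a,b)\}.
\end{equation*}
Continuity of $S$ forces $\|S_\sigma - S_{\tau_{n-1}}\| = a$ exactly on $\{\sigma < T\}$, giving the lower bound on $S^\star_{\tau_{n-1}}$; for $u \in [\sigma, T]$ the triangle inequality yields $\|S_u - S_{\tau_{n-1}}\| \leq a + S^\star_\sigma < b$, and for $u \in [\tau_{n-1},\sigma]$ one has $\|S_u - S_{\tau_{n-1}}\| \leq a$ by the definition of $\sigma$.

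It then suffices to show that $\{\sigma < T\} \cap \{S^\star_\sigma < b - a\}$ has positive $\mathscr{F}_{\tau_{n-1}}$-conditional probability on $\{\overline{S}_{\tau_{n-1}} > 0\}$, which splits into two steps. First, since $a < \delta_{n-1} \leq \overline{S}_{\tau_{n-1}}$, the definition of $\overline{S}_{\tau_{n-1}}$ as $\sup \supp \mathscr{L}(S^\star_{\tau_{n-1}}|\mathscr{F}_{\tau_{n-1}})$ yields $\mathbf{P}[S^\star_{\tau_{n-1}} > a | \mathscr{F}_{\tau_{n-1}}] > 0$ a.s.\ there; path continuity gives $\{S^\star_{\tau_{n-1}} > a\} \subset \{\sigma < T\}$, so $\mathbf{P}[\sigma < T | \mathscr{F}_{\tau_{n-1}}] > 0$. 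Second, Lemma~\ref{strong-stickiness} applied at the stopping time $\sigma$ with the $\mathscr{F}_\sigma$-measurable radius $\eta = b - a = \delta_{n-1}/(2d+1) > 0$ gives $\mathbf{P}[S^\star_\sigma < b - a | \mathscr{F}_\sigma] > 0$ a.s.\ on $\{\overline{S}_{\tau_{n-1}}>0\}$. A tower-property computation,
\begin{equation*}
\mathbf{P}[\{\sigma<T\} \cap \{S^\star_\sigma < b-a\}|\mathscr{F}_{\tau_{n-1}}] = \mathbf{E}\big[\mathbf{1}_{\{\sigma<T\}} \mathbf{P}[S^\star_\sigma < b-a|\mathscr{F}_\sigma] \big| \mathscr{F}_{\tau_{n-1}}\big],
\end{equation*}
then closes the proof, since the integrand is nonnegative and strictly positive on a conditionally non-negligible event.

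The main technical obstacle I expect is the first positivity estimate: transferring the pathwise inequality $a < \overline{S}_{\tau_{n-1}}$ between two $\mathscr{F}_{\tau_{n-1}}$-measurable random variables into $\mathbf{P}[S^\star_{\tau_{n-1}} > a | \mathscr{F}_{\tau_{n-1}}] > 0$. This requires a rational-approximation argument in the same spirit as the alternative representation of $\overline{S}_\tau$ given just after \eqref{support-maximum}, but is otherwise routine; everything else is a direct combination of strong stickiness, path continuity, and the triangle inequality.
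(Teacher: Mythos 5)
Your proof is correct and takes essentially the same route as the paper's: a hitting time that brings $\|S_\cdot - S_{\tau_{n-1}}\|$ to a level inside the target bin, an application of Lemma \ref{strong-stickiness} at that stopping time with an $\mathscr{F}_{\tau_{n-1}}$-measurable radius to keep the path inside the bin, and a tower/conditioning argument combining the two positive conditional probabilities. The only cosmetic differences are that the paper hits the midpoint $\frac{2i-1}{2(2d+1)}\delta_{n-1}$ and uses the half-width radius $\frac{1}{2(2d+1)}\delta_{n-1}$, which handles all $i$ (including $i=1$) uniformly, whereas you hit the left endpoint with the full bin width and treat $i=1$ separately.
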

\begin{proof}
Fix $n \in \mathbb{N}$ and $i \in \{1,\ldots,2d+1\}$. Again to exclude a triviality (cf.\ the proof of Lemma \ref{keylem}), we assume that $\mathbf{P}\big[\overline{S}_{\tau_{n-1}}>0\big]>0$. Define a stopping time
\begin{equation*}
\rho \eqdefl \inf \bigg\{ t > \tau_{n-1} : \|S_t-S_{\tau_{n-1}} \| = \frac{2i-1}{2(2d+1)}\delta_{n-1}\bigg\}\wedge T
\end{equation*}
and note that $\mathbf{P}[\rho<T|\mathscr{F}_{\tau_{n-1}}]>0$ a.s.\ on $\big\{\overline{S}_{\tau_{n-1}}>0\big\}$ by the definition of $\overline{S}_{\tau_{n-1}}$ and $\delta_{n-1}$. Clearly, we have the inclusion
\begin{equation}\label{subset}
\{\rho<T\} \cap E  \subset C_{n}^{i},
\end{equation}
where
\begin{equation*}
E \eqdefl \bigg\{ S^\star_\rho < \frac{1}{2(2d+1)}\delta_{n-1}\bigg\}.
\end{equation*}
Let now $A \in \mathscr{F}_{\tau_{n-1}}$ be such that $\mathbf{P}[A]>0$ and $A \subset \big\{\overline{S}_{\tau_{n-1}}>0\big\}$. Observe that
\begin{equation*}
\mathbf{E}\big[\mathbf{1}_A\mathbf{P}[\{\rho<T\} \cap E|\mathscr{F}_{\tau_{n-1}}]\big] = \mathbf{E}\big[\mathbf{1}_{A \cap \{\rho<T\}}\mathbf{P}[E|\mathscr{F}_\rho]\big]>0
\end{equation*}
since $\mathbf{P}[A \cap \{\rho<T\}]>0$ and $\mathbf{P}[E|\mathscr{F}_\rho]>0$ a.s.\ by the stickiness of $S$ and Lemma \ref{strong-stickiness}. The assertion follows now from the inclusion \eqref{subset}.
\end{proof}

\section*{Acknowledgements}

M. S. Pakkanen and H. Sayit thank the Department of Mathematics at Saarland University for hospitality.
Additionally, M. S. Pakkanen acknowledges support 
from CREATES (DNRF78), funded by the Danish National Research Foundation, 
from the Aarhus University Research Foundation regarding the  project ``Stochastic and Econometric Analysis of Commodity Markets", and
from the Academy of Finland (project 258042). C. Bender acknowledges support by the Deutsche Forschungsgemeinschaft under grant BE3933/4-1.

\bibliographystyle{plain}

\end{document}